\definecolor{DarkGray}{rgb}{0.1,0.1,0.5}
\newcommand{\bra}[1]{{\langle#1|}}
\newcommand{\ket}[1]{{|#1\rangle}}
\newcommand{\braket}[2]{{\langle#1|#2\rangle}}
\newcommand{\ketbra}[2]{{\ket{#1}\!\bra{#2}}}
\newcommand{\abs}[1]{{\lvert #1\rvert}}	
\newcommand{\binomial}[2]{\ensuremath{\left(\begin{smallmatrix}#1 \\ #2 \end{smallmatrix}\right)}}
\DeclareMathOperator{\Tr}{\operatorname{Tr}}
\def\H {{\mathcal H}}
\newcounter{sprows}
\newlength{\spheight}
\newlength{\spraise}
\newlength{\commentslength}
\newcommand{\rem}[1]{}
\newtheorem{theorem}{Theorem}[section]
\newtheorem{lemma}[theorem]{Lemma}
\newtheorem{corollary}[theorem]{Corollary}
\newfont{\subsubsecfnt}{ptmri8t at 11pt}
\renewcommand{\subparagraph}[1]{\smallskip{\subsubsecfnt #1.}}
\newcommand{\eqnref}[1]{\hyperref[#1]{{(\ref*{#1})}}}
\newcommand{\thmref}[1]{\hyperref[#1]{{Theorem~\ref*{#1}}}}
\newcommand{\lemref}[1]{\hyperref[#1]{{Lemma~\ref*{#1}}}}
\newcommand{\corref}[1]{\hyperref[#1]{{Corollary~\ref*{#1}}}}
\newcommand{\defref}[1]{\hyperref[#1]{{Definition~\ref*{#1}}}}
\newcommand{\secref}[1]{\hyperref[#1]{{Section~\ref*{#1}}}}
\newcommand{\figref}[1]{\hyperref[#1]{{Figure~\ref*{#1}}}}
\newcommand{\tabref}[1]{\hyperref[#1]{{Table~\ref*{#1}}}}
\newcommand{\remref}[1]{\hyperref[#1]{{Remark~\ref*{#1}}}}
\newcommand{\appref}[1]{\hyperref[#1]{{Appendix~\ref*{#1}}}}
\newcommand{\claimref}[1]{\hyperref[#1]{{Claim~\ref*{#1}}}}
\newcommand{\factref}[1]{\hyperref[#1]{{Fact~\ref*{#1}}}}
\newcommand{\propref}[1]{\hyperref[#1]{{Proposition~\ref*{#1}}}}
\newcommand{\exampleref}[1]{\hyperref[#1]{{Example~\ref*{#1}}}}
\newcommand{\conjref}[1]{\hyperref[#1]{{Conjecture~\ref*{#1}}}}
\def\COLOR{}
\def\beq{\begin{equation}}
\def\eeq{\end{equation}}
\begin{document}
\def\compilefullpaper{}

\title{Quantum dimension test using the uncertainty principle}
\author{Rui Chao \qquad Ben W. Reichardt \\ University of Southern California}
\date{}
\maketitle

\begin{abstract}
We propose a test for certifying the dimension of a quantum system: store in it a random $n$-bit string, in either the computational or the Hadamard basis, and later check that the string can be mostly recovered.  
The protocol tolerates noise, and the verifier only needs to prepare one-qubit states.  
The analysis is based on uncertainty relations in the presence of quantum memory, due to Berta et al.~(2010).  
\end{abstract}

\section{Introduction}

The dimension of a quantum  system is the maximum number of perfectly distinguishable states that can be stored, alone or in superposition.  
Quantum dimension is a fundamental characteristic of a system.  
It is crucial for quantum computation, and for quantum cryptography.  
It is useful to take a conservative perspective with minimal physical assumptions, and then develop practical methods for certifying the dimensionality of an uncharacterized quantum system.  

Quantum dimension is different from, and superior to, classical dimension. 
Although a classical system of $n$ bits is $2^n$ dimensional, the $2^n$ parameters of an $n$-qubit system can interfere with each other. 
Interference and entanglement allow for 
the numerous algorithmic speedups and information-theoretic advantages of quantum computers. 

Here, we propose a fairly simple test for the dimension of a quantum system.  
The verifier encodes a uniformly random $n$-bit string into $n$ qubits, in either the computational or the Hadamard basis, and then sends the qubits to the device. (Operationally, she can prepare and send one qubit at a time.) 
Later, she reveals the basis, and asks for the string back.  
She compares it to the original string, and accepts if they agree on at least $(1 - \alpha) n$ bits. 
Crucially, allowing $\alpha > 0$ makes the protocol resilient to noise. 

To prove the validity of the test, we show that it can be fitted into the very useful framework of \emph{uncertainty principle with quantum memory}~\cite{Berta10uncertainty,Coles17rmp}.  
This principle generalizes the traditional Heisenberg uncertainty principle, in the sense that the measured system can be entangled with a quantum memory, and the corresponding uncertainties, quantified as entropies, are conditioned on accessing the information from the memory. 
As we will formulate in \secref{sec:protocolentanglement}, there is a straightforward equivalence between the bipartite entangled setting and our protocol.  
As a result, our analysis reduces to estimating the uncertainties bounded, using the principle, by the amount of entanglement, which further bounds the dimension. 

\medskip

\secref{sec:results} formally describes our protocol, and states the certifiable quantum dimension lower bound (\thmref{thm:bound}). 
\secref{sec:proof} proves the theorem. 
\secref{sec:numerics} gives numerics under a simplified noise model.  

\subsection*{Related work}

Formulating rigorous notions of dimension in general physical theories is a nontrivial problem~\cite{barrett2007information,brunner2014dimension}.
Here we restrict ourselves to quantum dimension, i.e., the dimension of the underlying Hilbert space of the system. 
Below we briefly review previous tests along this line, which involve fairly diverse assumptions and applicable contexts. 
Most of them essentially certify high entanglement, with dimension lower bounds following as corollaries. 

The first formal test is the ``dimension witness''~\cite{Brunner08dimension}, which is a Bell-type inequality fulfilled by all correlations generated by local quantum measurements on multipartite states within fixed dimension.  Dimension witness is then made ``device independent''~\cite{Gallego10didimension}, where the conditional probabilities in the inequalities are generated from an ensemble of single-party states, rather than a single multipartite state. 
This framework has been extended to a realistic scenario~\cite{dall12robustdidimension}, and related to quantum random access codes~\cite{Wehner08dimension} and state discrimination~\cite{Brunner13dimension}. 
While efficient tools~\cite{navascues2015bounding, sikora2016minimum} exist for computing a witness for arbitrary dimensions, the systems in experiments so far~\cite{hendrych12experimentdidimension, Ahrens12experimentdidimension, Ahrens14dimensionexperiment, Ambrosio14didimension, cai2016new, sun2016experimental} have not exceeded dimension 13.

Another closely related but stronger method, still based on Bell-type inequalities, is ``self testing.'' 
It states that a nearly optimal strategy for a specific non-local game implies the closeness of the underlying quantum system to a certain entangled state, thus lower bounding the dimension of individual local systems. 
Protocols exist for robust self-testing of asymptotically many EPR states~\cite{McKague15parallelselftesting, coudron2016parallel, coladangelo2017parallel, chao2016test, overlapping17csrv, natarajan2017quantum, natarajanlow}, or even infinite entanglement~\cite{slofstra2018group,ji2018three,coladangelo2019two}. 
The idea of self-testing can also be extended to the certification of one-shot distillable entanglement~\cite{arnon2019device} and entanglement of formation~\cite{arnon2017noise}. 
Recent experiments~\cite{zhang2018experimental, wang2018multidimensional} self-test up to 15-dimensional  bipartite entangled states on photonic platforms.

Besides Bell-type inequalities, another common technique is tomography.  
Full state tomography on a $d$-dimensional bipartite state requires measurement of $O(d^2)$ product bases,  while~\cite{Bavaresco18twomeasurement} shows that only two bases suffice for adaptive measurements.  
Their experiment is able to certify 9-dimensional entanglement within an 11-dimensional entangled photon pair.  

In experiments based on the above methods, typically the first step is to prepare a bipartite system and both subsystems are supposed to support high-dimensional quantum storage.  
In this work, we present a protocol based on the uncertainty principle with quantum memory, in which the verifier in principle only needs one qubit in storage at a time.  
The uncertainty principle with quantum memory has been investigated in experiments~\cite{li11uncertaintyexperiment, prevedel11uncertaintyexperiment} where individual photon pairs are repeatedly generated and measured locally in the Pauli $X$ or $Z$ basis.  
In particular, the quantum dimension that is effectively certified is up to two.  
Here we extend their analysis so that our protocol is noise-tolerant and can certify asymptotically large quantum dimension.  

Pfister et al.~\cite{pfister2018capacity} have proposed and implemented a test that is similar to ours, but with crucial distinctions. 
In their test, for instance, the verifier's random string and basis, once chosen, are immediately communicated to the prover. 
More importantly, their test certifies a lower bound of \emph{one-shot quantum capacity}~\cite{buscemi2010quantum,tomamichel2016quantum}, which is a different characterization of quantum storage than quantum dimension. 

\section{Dimension lower bound}
\label{sec:results}
 
\begin{figure}[!t]
{ \noindent \hrulefill \\  
\centering \textbf{Dimension test protocol} \\ } \medskip
Let $n \geq 1$ be an integer and $\alpha \in [0, 1/2)$.  
\begin{enumerate}
\item Alice chooses $\Theta \in \{0,1\}$ and $S \in \{0,1\}^n$, both uniformly at random.  \\
For $j = 1, \ldots, n$, she prepares and sends to Bob a qubit in state $H^\Theta \ket{S_j}$, where $H = \tfrac{1}{\sqrt 2} \big(\begin{smallmatrix}1&1\\1&\!-1\end{smallmatrix}\big)$ is the Hadamard matrix.  
\item Alice announces her basis choice $\Theta$ to Bob.  
\item Based on $\Theta$, Bob measures his system, and outputs a guess~$S' \in \{0,1\}^n$.  
\end{enumerate}
Bob passes the test if and only if $S$ and $S'$ match on at least $(1-\alpha) n$ bits.  \\
\vspace{0\baselineskip}
\hrulefill
\caption{A test for certifying the quantum dimension of Bob's system.} \label{fig:protocol}
\end{figure}

\figref{fig:protocol} outlines the protocol.  Essentially, Alice sends to Bob a random string, in either the computational or the Hadamard basis, and later she checks that it can mostly be recovered.  Observe that Alice only needs to be able to prepare and transmit one qubit at a time.  By setting the parameter $\alpha$ appropriately, a certain amount of noise on the qubits can be tolerated.  We discuss this further in \secref{sec:numerics}.  

\figref{fig:protocol} shows all of Alice's actions.  Bob's goal is to pass the test.  What can Bob do?  For the test to have any validity, we assume that Alice's choices of $S$ and~$\Theta$ are secret from Bob, the latter until she announces it.  In particular, there should not be a side channel by which Bob can learn this information.  Aside from that, however, we try to avoid limiting Bob's ability.  In particular, we allow Bob arbitrary computational power, and arbitrary classical memory.  Assume that Bob has two registers, $C$ and~$Q$, classical and quantum, respectively.  The registers can be initialized arbitrarily, independent of $S$ and~$\Theta$.  One can imagine various dynamics for how Bob processes each qubit as it arrives from Alice.  This is not important.  Conservatively, we can model Bob as receiving $(H^{\otimes n})^\Theta \ket{S}$ all at once.  Then at the end of step 1, we allow Bob to apply an arbitrary classical-quantum channel $\mathcal C_B : ({\mathbb C}^2)^{\otimes n} \rightarrow \H_B = \H_C \otimes \H_Q$.  
In step~$3$, Bob can make arbitrary manipulations to determine his response~$S'$; without loss of generality, they can be taken to be a POVM on $\H_C \otimes \H_Q$.  

Our main theorem is: 

\begin{theorem} \label{thm:bound}
In the test of \figref{fig:protocol} with $\alpha \leq 1/2$, if Bob passes with probability $p$, then
\begin{equation}
\log \dim \H_Q 
\geq n - 2 H(p) - 2 p \log M - 2 (1-p) \log (2^n - M)
 \enspace ,
\end{equation}
where $M = \sum_{i=0}^{\alpha n} \binom{n}{i}$ and $H(x) = -x \log x - (1-x) \log (1-x)$ is the binary entropy function.  In particular, 
\begin{equation} \label{eqn:stirlingbound}
\dim \H_Q
\geq 
2^{\vstretch{1.15}{(} (1 - H(\alpha)) 2 p - 1 \vstretch{1.15}{)} \, n - 2 H(p)}
 \enspace .
\end{equation}
\end{theorem}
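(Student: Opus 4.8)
The plan is to recast the protocol as a bipartite measurement experiment and then invoke the uncertainty relation with quantum memory of Berta et al. First I would pass to the entanglement-based picture promised in \secref{sec:protocolentanglement}: instead of sampling $S$ and transmitting $(H^{\otimes n})^\Theta \ket S$, let Alice hold the $A$-halves of $n$ EPR pairs and send the other halves to Bob, who applies his channel $\mathcal C_B$ to obtain $\H_B = \H_C \otimes \H_Q$; write $\rho_{AB}$ for the resulting state. Measuring $A$ in the computational ($Z$) basis produces a uniform $S$ and collapses Bob's side exactly as in the $\Theta = 0$ branch, while measuring in the Hadamard ($X$) basis realizes the $\Theta = 1$ branch. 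Bob's guessing game is then identical in the two pictures, so I can work entirely with $\rho_{AB}$.

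Next I would apply the uncertainty relation to the complementary observables $Z$ and $X$ on $A$ with memory $B$. Since the maximal overlap between the $n$-qubit computational and Hadamard eigenbases is $c = 2^{-n}$, the relation reads $H(Z|B) + H(X|B) \geq n + H(A|B)$, with the conditional entropies evaluated on $\rho_{AB}$.

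Two estimates then convert this into the theorem. For the memory term, the classical register $C$ carries no entanglement: writing $\rho_{ACQ} = \sum_c p_c \ketbra{c}{c}_C \otimes \rho^c_{AQ}$, one has $H(A|B) = \sum_c p_c H(A|Q)_{\rho^c}$, and each branch satisfies $H(A|Q) = H(AQ) - H(Q) \geq -H(Q) \geq -\log \dim \H_Q$ because $H(AQ) \geq 0$ and $H(Q) \leq \log \dim \H_Q$; hence $H(A|B) \geq -\log \dim \H_Q$. For the measurement terms I would establish a Fano-type inequality for \emph{approximate} recovery. Writing $S'$ for Bob's guess in a fixed basis $\theta$ and $p_\theta$ for the probability of the pass event $\{d(S,S') \leq \alpha n\}$ (Hamming distance), the data-processing inequality along $S \to B \to S'$ gives the measurement entropy $H(S|B) \leq H(S|S')$, and conditioning on the pass/fail indicator yields
\[
H(S|S') \leq H(p_\theta) + p_\theta \log M + (1-p_\theta)\log(2^n - M),
\]
since a passing guess pins $S$ to one of the $M = \sum_{i=0}^{\alpha n}\binom{n}{i}$ strings within distance $\alpha n$ of $S'$ and a failing guess to one of the remaining $2^n - M$.

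Finally I would assemble the pieces. Substituting the three bounds into the uncertainty relation, and using $p = \tfrac12(p_0 + p_1)$ together with concavity of the binary entropy ($H(p_0) + H(p_1) \leq 2H(p)$), gives $n - \log \dim \H_Q \leq 2H(p) + 2p\log M + 2(1-p)\log(2^n - M)$, which rearranges to the stated main inequality. The clean form \eqnref{eqn:stirlingbound} then follows from the volume bound $M \leq 2^{H(\alpha) n}$ (valid for $\alpha \leq 1/2$) and $\log(2^n - M) \leq n$, noting that the coefficients $-2p$ and $-2(1-p)$ are nonpositive so the inequalities point the right way. I expect the main obstacle to be the approximate-recovery refinement of Fano's inequality: the textbook version handles exact recovery, so I must carry out the Hamming-ball counting that produces $M$ and $2^n - M$, and verify that both the data-processing step and the branchwise dimension bound remain valid in the presence of Bob's classical register $C$.
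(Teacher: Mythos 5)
Your proposal is correct and follows essentially the same route as the paper: the entanglement-based reformulation, the Berta et al.\ relation with $c = 2^{-n}$, the classical-quantum decomposition giving $H(A|B) \geq -\log\dim\H_Q$, data processing, and the Hamming-ball Fano inequality with $M$ and $2^n - M$. The only (immaterial) difference is that you apply Fano per basis and then average using concavity of $H$, whereas the paper runs the chain-rule computation once on the $\Theta$-averaged entropy.
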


In an experiment, of course, the test needs to be repeated to get an empirical estimate $\hat p$ of~$p$; and the dimension lower bound holds with statistical confidence.  

\section{Proof of \thmref{thm:bound}}
\label{sec:proof}

\subsection{Entanglement-based protocol}
\label{sec:protocolentanglement}

The test in \figref{fig:protocol} is mathematically equivalent to one in which Alice and Bob initially share $n$ EPR states, and Alice measures her qubits in either the Pauli $Z$ or $X$ basis to generate her string~$S$; see \figref{fig:protocolentanglement}.  
(In both cases, the joint distribution of  Alice's recorded string and the corresponding received or marginal state  on Bob's side are the same.)  
The protocol in \figref{fig:protocol} is easier to implement, since Alice needs to store only one qubit at a time, but the protocol in \figref{fig:protocolentanglement} can be analyzed using entanglement measures.  

\begin{figure}[!t]
{ \noindent \hrulefill \\  
\centering \textbf{Entanglement-based dimension test protocol} \\ } \medskip
Let $n \geq 1$ be an integer and $\alpha \in [0, 1/2)$.  Assume that Alice and Bob initially share halves of $n$ EPR states, $\tfrac{1}{2^{n/2}} ( \ket{00} + \ket{11} )^{\otimes n}$.  
\begin{enumerate}
\item 
Bob applies channel $\mathcal C_B$ to his $n$ qubits, with a classical-quantum output.
\item 
Alice chooses $\Theta \in \{X, Z\}$ uniformly, and announces it to Bob.  
\item 
Alice measures her $n$ qubits in the Pauli $\Theta$ basis, obtaining string $S \in \{0,1\}^n$.  \\
Bob measures his system with a POVM and returns the outcome $S' \in \{0,1\}^n$. 
\end{enumerate}
Bob passes the test if and only if $S$ and $S'$ match on at least $(1-\alpha) n$ bits.  \\
\vspace{0\baselineskip}
\hrulefill
\caption{A mathematically equivalent dimension test.} \label{fig:protocolentanglement}
\end{figure}

Denote the two parties' joint state, before any measurements, as $\rho_{AB} = \frac{1}{2^n} \left(\mathcal I_A \otimes \mathcal C_B \right) \big[(\ket{00} + \ket{11})(\bra{00} + \bra{11}) \big]^{\otimes n} \in \H_A \otimes \H_B$.  
The dimension of Bob's Hilbert space $\H_B$ can be related to the entanglement of the state~$\rho_{AB}$.  
To this end, we need the notions of von Neumann entropy $H(\sigma) := -\Tr[\sigma\log\sigma]$, and conditional von Neumann entropy $H(A \vert B)_\sigma := H(\sigma_{AB}) - H(\Tr_A \sigma_{AB})$.  
$H(\{p_i\})$ denotes the Shannon entropy of discrete probability distribution $\{p_i\}$, and all logarithms are base 2.  

From the definition of conditional von Neumann entropy it follows that
\begin{equation} \label{eqn:logdim}
\log(\min\{\dim \H_A, \dim \H_B\}) \geq -H(A \vert B)_\rho \, . 
\end{equation}
The equality holds if and only if $\rho_{AB}$ is a maximally entangled state. 
In fact, $H(A \vert B)$ serves as a witness to a closely related measure of entanglement, titled \emph{distillable entanglement}~\cite{devetak2005distillation}, via the relation $E_{\text{distill}}(\rho) \geq - H(A \vert B)_\rho$.  

Furthermore, $-H(A \vert B)_\rho$ lower bounds the dimension of Bob's quantum register: 
\begin{equation} \label{eqn:logdimquantum}
\log \dim \H_Q \geq -H(A \vert B)_\rho \, .
\end{equation}
Indeed, suppose the classical-quantum output of Bob's channel is given by
\begin{equation*}
\rho_B = \Tr_A \rho_{AB} = \sum_i p_i \ketbra i i \otimes \sigma_i \in \H_C \otimes \H_Q
\enspace .
\end{equation*}
Then the full state $\rho_{AB}$ must have the form $\rho_{AB} = \sum_i p_i (\ketbra i i)_C \otimes (\sigma_i')_{AQ}$, where $\Tr_A \sigma_i' = \sigma_i$.  
So 
\begin{equation*}\begin{split}
H(A \vert B)_\rho 
&= H(\rho_{AB}) - H(\rho_B) 
= \Big[ H(\{p_i\}) + \sum_i p_i H(\sigma'_i) \Big] - \Big[H(\{p_i\}) + \sum_i p_i H(\sigma_i) \Big] \\
&\geq \min_i \big( H(\sigma'_i) - H(\sigma_i) \big) 
= \min_i H(A \vert Q)_{\sigma'}
\geq - \log \dim \H_Q
 \, .
\end{split}\end{equation*}

\subsection{Uncertainty principle with quantum memory}
\label{sec:uncertainty}

In this section, we introduce an uncertainty principle with quantum memory, which allows upper bounding $H(A \vert B)_\rho$ by Bob's uncertainty on Alice's measurement outcomes.  
In~\secref{sec:bounds}, we will show how to upper bound Bob's uncertainty through only one parameter, namely, the probability of passing the dimension test.  

There are several versions of uncertainty relations; see~\cite{Coles17rmp} for a review of their history and applications.  
We will use an uncertainty relation in the presence of quantum memory, and in terms of von Neumann entropy,  developed by~\cite{Christandl2005uncertainty,Renes2009conjectured,Berta10uncertainty}. 
Incorporating memory is important for our two-party scenario, and von Neumann entropy is simpler to manipulate and empirically estimate, and is more readily related to system dimension (see~\eqnref{eqn:logdim}), compared to other types of entropies.  

\begin{theorem}[\cite{Berta10uncertainty}]
\label{thm:uncertainty}
For a quantum state $\rho \in \H_A \otimes \H_B$, and orthonormal bases $X = \{ \ket x \}$ and $Z = \{ \ket z \}$ for $\H_A$, 
\begin{equation} \label{eqn:uncertainty}
H(A \vert B)_{X(\rho)} + H(A \vert B)_{Z(\rho)} - \log \tfrac1c \geq H(A\vert B)_\rho \, , 
\end{equation}
where $c = \max_{x,z} \abs{\braket x z}^2$, and $X(\rho) = \sum_x \ketbra x x_A \,\rho \, \ketbra x x_A$ and $Z(\rho) = \sum_z \ketbra z z_A \,\rho \, \ketbra z z_A$ are the results of measuring $\rho$ in the $X$ and $Z$ bases, respectively.  
\end{theorem}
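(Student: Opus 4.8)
The plan is to reduce the bipartite statement to a tripartite ``complementary'' uncertainty relation and then establish the latter from the monotonicity of quantum relative entropy. First I would purify: let $\psi \in \H_A \otimes \H_B \otimes \H_C$ be a pure state with $\Tr_C \psi = \rho$, and recall the duality of the conditional von Neumann entropy, $H(A \vert B)_\psi = - H(A \vert C)_\psi$ whenever $\psi$ is pure (immediate from $H(AB)_\psi = H(C)_\psi$ and $H(B)_\psi = H(AC)_\psi$, using the definition $H(A\vert B)_\sigma = H(\sigma_{AB}) - H(\Tr_A \sigma_{AB})$). The two measurement channels $X(\cdot)$ and $Z(\cdot)$ should be modelled as isometries that coherently copy the outcome into fresh registers $\bar X$, $\bar Z$, so that the global state stays pure and the duality can be applied to each measured state. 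Under this dictionary the term $H(A\vert B)_\rho$ on the right of \eqnref{eqn:uncertainty} migrates to the side conditioned on $C$, and the claim becomes a Renes--Boileau-type inequality $H(\bar X \vert B) + H(\bar Z \vert C) \geq \log\tfrac1c$ for the purified, coherently-measured state~\cite{Renes2009conjectured}.

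The heart of the argument is this tripartite inequality, and the plan is to prove it by writing each conditional entropy as a relative entropy, $H(A\vert B)_\sigma = -D\big(\sigma_{AB} \,\big\|\, \identity_A \otimes \sigma_B\big)$ with $D(\sigma\|\tau) = \Tr[\sigma\log\sigma] - \Tr[\sigma\log\tau]$, and then applying the data-processing inequality to the measurement channels. Data processing controls how much a dephasing in the $Z$ basis can reduce distinguishability, and the overlap constant $c = \max_{x,z}\abs{\braket x z}^2$ enters precisely here: a $Z$-basis pinching followed by an $X$-basis pinching maps any $Z$-basis projector $\ketbra z z$ to $\sum_x \abs{\braket x z}^2 \ketbra x x$, whose diagonal entries are bounded by $c$, and it is this operator inequality, fed through the monotonicity of $D$, that produces the additive $\log\tfrac1c$. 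Collecting the two relative-entropy terms and invoking the duality relation of the first paragraph then yields~\eqnref{eqn:uncertainty}.

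I expect the main obstacle to be bookkeeping rather than a single hard estimate: one must dilate the two incompatible measurements into isometries so that entropy duality applies to honestly pure states (dephasing $\rho$ in place is not enough), and then track the direction of each data-processing step so that $c$ appears with the correct sign and on the correct side. A secondary subtlety is that $H(A\vert B)_\rho$ may be negative --- indeed that negativity is exactly what is exploited in \eqnref{eqn:logdimquantum} --- so the relation is only informative when the two measured uncertainties are small, and the argument must not implicitly assume $H(A\vert B)_\rho \geq 0$. An alternative route, closer to the original~\cite{Berta10uncertainty}, first proves the stronger smooth min/max-entropy form $H_{\min}(\bar X \vert B) + H_{\max}(\bar Z \vert C) \geq \log\tfrac1c$ via the duality $H_{\min}(A\vert B) = -H_{\max}(A\vert C)$ and then specializes; I would prefer the relative-entropy proof above, since the von Neumann statement is all we need and it avoids smoothing.
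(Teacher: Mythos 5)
The paper does not actually prove \thmref{thm:uncertainty}: it is imported verbatim from~\cite{Berta10uncertainty} (see also~\cite{Christandl2005uncertainty,Renes2009conjectured,Coles17rmp}), so there is no in-paper proof to compare against. Judged on its own, your plan correctly reconstructs the standard argument from that literature: purify $\rho_{AB}$ to $\psi_{ABC}$, dilate each measurement to an isometry copying the outcome into a fresh register so that the global state stays pure, and use the duality $H(A\vert B)_\psi = -H(A\vert C)_\psi$ to convert the bipartite statement into the tripartite relation $H(\bar X\vert B) + H(\bar Z\vert C) \geq \log\tfrac1c$; the identity $H(\bar Z\vert C) = H(A\vert B)_{Z(\rho)} - H(A\vert B)_\rho$ that makes this conversion work is exactly right, and it also disposes of your worry about the sign of $H(A\vert B)_\rho$. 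The remaining tripartite inequality is the one genuinely hard step, and the ingredients you name --- writing $H(A\vert B)_\sigma = -D(\sigma_{AB}\,\|\,\identity_A\otimes\sigma_B)$, monotonicity of $D$ under the measurement channel, and the operator bound $\sum_x \abs{\braket x z}^2 \ketbra x x \leq c\,\identity$ feeding through operator monotonicity of the logarithm to produce the additive $\log\tfrac1c$ --- are precisely those of the relative-entropy proof (Coles et al.; cf.~\cite{Coles17rmp}). What you have is a correct proof outline rather than a complete proof: the exact sequence of data-processing steps in the tripartite argument still needs to be written out, but no idea is missing.
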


Eq.~\eqnref{eqn:uncertainty} lower-bounds Bob's uncertainty on Alice's measurement outcomes in terms of the complementarity~$c$ of Alice's measurements and of the entanglement in the initial state~$\rho$.  
For example, if $X$ and $Z$ are mutually unbiased bases, then $c = 1 / \dim \H_A$, and by Eq.~\eqnref{eqn:logdim}, $\log \dim \H_B \geq -H(A \vert B)_\rho \geq \log \dim \H_A - \big( H(A \vert B)_{X(\rho)} + H(A \vert B)_{Z(\rho)} \big)$.  Therefore an upper bound on the sum of the entropies of Alice's measurement outcomes conditioned on Bob's system implies a lower bound on $\dim \H_B$.  

\thmref{thm:uncertainty} has been demonstrated in proof-of-principle experiments on single pairs of photons in~\cite{li11uncertaintyexperiment, prevedel11uncertaintyexperiment}.  
These experiments effectively certify quantum dimension of up to two.  
They estimate and upper-bound $H(A \vert B)_{X(\rho)} + H(A \vert B)_{Z(\rho)}$ using techniques such as the data-processing inequality and Fano's inequality.  
In the next section, we generalize these techniques so that an asymptotically large quantum dimension can be certified, in a noise-tolerant way.

\subsection{Upper bounds of conditional entropies}
\label{sec:bounds}

For analyzing the protocol in \figref{fig:protocolentanglement}, denote  
\begin{equation*}\begin{split} 
H(S \vert S', \Theta) 
&= \frac12 \Big( H(S \vert S', \Theta=X) + H(S \vert S', \Theta=Z) \Big) \\
H(S \vert B, \Theta) 
&= \frac12 \Big( H(S \vert B, \Theta=X) + H(S \vert B, \Theta=Z) \Big) \\
&= \frac12 \Big( H(A \vert B)_{X(\rho)} + H(A \vert B)_{Z(\rho)} \Big)
 \enspace . 
\end{split}\end{equation*}
Let $p^\theta$ be the probability that Bob passes the test conditioned on Alice measuring in basis~$\theta$, thus $p = \tfrac12 (p^X + p^Z)$.  

\begin{lemma}[Data-processing inequality] \label{lem:data}
For $\theta \in \{X, Z\}$, 
\begin{equation*}
H(S \vert B, \Theta = \theta) \leq H(S \vert S', \Theta = \theta)
 \enspace .  
\end{equation*}
\end{lemma}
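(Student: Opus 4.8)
The plan is to recognize \lemref{lem:data} as a direct application of the data-processing inequality for conditional von Neumann entropy. First I would fix $\theta \in \{X, Z\}$ and work entirely within the conditional state; the randomness of $\Theta$ is handled by the averaging that defines $H(S \vert B, \Theta)$, so for the lemma it suffices to treat a single fixed basis. The key observation is that Bob's guess $S'$ is produced by applying a POVM (equivalently, a quantum-to-classical channel $\Lambda$) to his system $B$. Thus the classical string $S'$ is obtained from the quantum register $B$ by a physical map that does not touch Alice's system, and $S'$ is therefore a function of (a measurement of) $B$ alone.

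The central step is to invoke data processing: for a tripartite state, the conditional entropy cannot decrease when the conditioning system is processed by a channel, i.e. $H(A \vert B) \leq H(A \vert B')$ whenever $B'$ is obtained from $B$ by a CPTP map. Concretely, I would let $\rho_{S B}$ denote the classical-quantum state obtained after Alice measures in basis $\theta$ (so $S = A$ is classical), and apply the channel $\Lambda: B \to S'$ that implements Bob's POVM. Since $\Lambda$ acts only on $B$, monotonicity of conditional entropy under local processing of the conditioning register gives
\begin{equation*}
H(S \vert B, \Theta = \theta) \leq H(S \vert S', \Theta = \theta)
\enspace ,
\end{equation*}
which is exactly the claim, with $H(S \vert B, \Theta = \theta) = H(A \vert B)_{\theta(\rho)}$ by the definition recorded just before the lemma.

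I would justify the monotonicity step either by citing strong subadditivity directly or by noting the standard chain: applying the channel $\Lambda$ can only increase conditional entropy because $I(S : S') \leq I(S : B)$ (data processing for mutual information), and then rewriting both sides using $H(S \vert \cdot) = H(S) - I(S : \cdot)$, where $H(S)$ is the same on both sides since $\Lambda$ does not alter Alice's marginal. The main obstacle, such as it is, is purely expository rather than mathematical: one must be careful that the quantity on the left genuinely matches $\tfrac12(H(A \vert B)_{X(\rho)} + H(A \vert B)_{Z(\rho)})$ per-basis, and that $S'$ is correctly modeled as the output of a channel on $B$ alone with Alice's register untouched — once that setup is pinned down, the inequality is immediate from a textbook application of data processing.
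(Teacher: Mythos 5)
Your proposal is correct and matches the paper's own argument: both identify Bob's POVM as a quantum channel acting on the conditioning register $B$ alone and invoke the data-processing inequality (equivalently, strong subadditivity) to conclude that the conditional entropy cannot decrease. Your write-up simply fills in more of the standard details than the paper's brief proof does.
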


\begin{proof}
Bob obtains $S'$ by measuring his register~$B$.  This is a quantum channel.  
According to the data processing inequality (or more precisely strong subadditivity of quantum entropies~\cite{lieb1973fundamental, lieb1973proof}), the conditional von Neumann entropy is non-decreasing after applying a quantum channel on the conditioned system.  
\end{proof}

\begin{proof}[Proof of~\thmref{thm:bound}]
Since the complementarity of Alice's measurements is $c = 1 / 2^n$, we have 
\begin{align*}
\log \dim \H_Q 
&\geq -H(A \vert B)_\rho && \text{by Eq.~\eqnref{eqn:logdimquantum}} \\
&\geq n - 2 H(S \vert B, \Theta) && \text{(\thmref{thm:uncertainty})} \\
&\geq n - 2 H(S \vert S', \Theta) && \text{(\lemref{lem:data})}
\end{align*}

Let $E = 1$ if Bob passes the protocol, i.e., $\abs{S \oplus S'} \leq \alpha n$, and $E = 0$ otherwise.  Applying the chain rule to $H(E S \vert S' \Theta)$ in two ways, 
\begin{equation*}\begin{split}
H(E S \vert S' \Theta) 
&= H(S \vert S'\Theta) + H(E \vert SS'\Theta) \\
&= H(E \vert S'\Theta) + H(S\vert ES'\Theta) 
 \enspace .
\end{split}\end{equation*}
Note that $H(E \vert S S' \Theta) = 0$ and $H(E \vert S'\Theta) \leq H(E \vert \Theta) = \frac12 \big( H(p^X) + H(p^Z) \big) \leq H(p)$.  
Thus we obtain a version of Fano's inequality, 
\begin{align*}
H(S \vert S', \Theta)
&\leq H(p) + p H(S \vert E=1, S', \Theta) + (1-p) H(S \vert E=0, S', \Theta)\\
&\leq H(p) + p \log M + (1 - p) \log (2^n - M) \,, 
\end{align*}
where $M = \sum_{i=0}^{\alpha n} \binomial{n}{i}$ is the number of possible winning choices for $S$, given $S'$.  

It remains just to estimate~$M$.  By~\cite[Lemma~16.19]{FlumGrohe06stirling}, using Stirling's approximation, 
\begin{equation*}
2^{n H(\alpha)} \geq 
M 
\geq \binom{n}{\alpha n}
\geq \frac{1}{\sqrt{8 n \alpha (1 - \alpha)}} 2^{n H(\alpha)} 
 \enspace .
\end{equation*}
Substituting $0 < M \leq 2^{n H(\alpha)}$ into the uncertainty relation gives Eq.~\eqnref{eqn:stirlingbound}.  
\end{proof}

\section{Performance in small quantum devices}
\label{sec:numerics}

To illustrate the practicality of our protocol, we calculate the number of qubits that can be certified using a noisy, intermediate-scale quantum (NISQ) device~\cite{Preskill18nisq} with $n \leq 90$ qubits.  To be concrete, we assume an honest Bob who always measures the received qubits in the same basis as Alice, which is optimal. In addition, we assume the following simplified noise model for simulating the protocol.  

\begin{enumerate}
\item Alice initializes qubits always in $\rho = \ketbra 0 0$, afflicted by a bit-flip channel $(1 - p_1) \rho + p_1 X \rho X$.
\item To encode a bit in certain basis, Alice applies a proper unitary rotation to her qubit, followed by a depolarizing channel $(1 - \frac43 p_2)\rho + \frac23 p_2 I$. For example, Alice applies $H X$ when encoding~$\ket - $, whereas she does nothing when encoding $\ket 0$.
\item Alice sends each qubit to Bob through a dephasing channel $(1 - p_3) \rho + p_3 Z \rho Z$.
\item Bob can only measure in the Pauli $Z$ basis, and the outcome is flipped with probability $p_4$.
\item To measure a qubit in basis other than Pauli $Z$, Bob can apply, before measuring $Z$, a proper unitary rotation,  which is followed by a depolarizing channel $(1 - \frac43 p_2) \rho + \frac23 p_2 I$.
\end{enumerate}

Errors of operations on different qubits or in different time-steps are independent. 
We choose the error rates $p_1, \ldots, p_4$ to be proportional to the infidelities of qubit reset, single-qubit gate, qubit shuttling and measurement, respectively, whose values are  from~\cite{bermudez2017assessing,kaufmann2018high}: 
\begin{equation*}
p_1 \propto 5 \cdot 10^{-3} \qquad p_2 \propto 5 \cdot 10^{-5} \qquad p_3 \propto 6 \cdot 10^{-6} \qquad p_4 \propto 10^{-3} \enspace .
\end{equation*}

\begin{figure}
\centering  
\includegraphics[scale=.5]{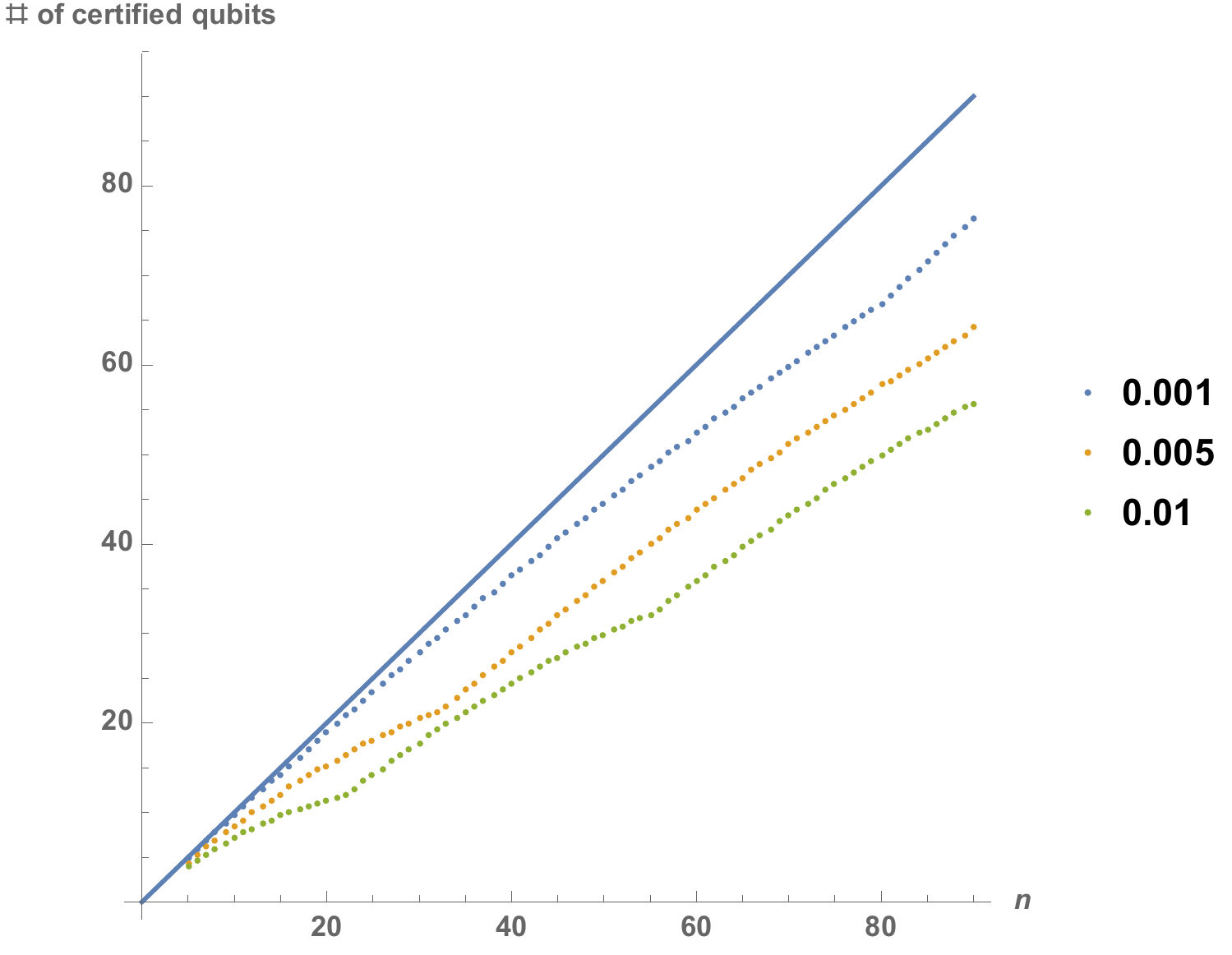}
\caption{Number of certified qubits out of $n$. 
For different values of the total noise rate, $\sum_{j=1}^4 p_j \in \{.001,.005,.01\}$, we plot the maximum number of certified qubits optimized over threshold~$\alpha$, for $5 \leq n \leq 90$.   
The solid blue line is $y = x$.  
} \label{fig:plot} 
\end{figure}

In our analysis, for fixed $n$ and threshold $\alpha$, the certified dimension is determined by  the winning probabilities $p^\theta$s, which are further determined by $p_1, \ldots, p_4$. For specific $n$ and total error rate, we calculate and plot the maximum number of certified qubits optimized over possible values of the threshold $\alpha$. See~\figref{fig:plot}. Note that the numerics do not consider the statistics to estimate the $p^\theta$s required in practice.

\section{Conclusion}
\label{sec:conclusion}

We have proposed a test for certifying the dimension of a quantum system, and analyzed it using the entropic uncertainty principle with quantum memory.  
Compared with previous schemes, our protocol is conceptually simpler and easier to implement.  
We envision that the test can soon be used in experiments to certify tens of qubits. 
One challenging problem is to determine the optimal cheating strategy for a quantum device with limited dimension.  
It is also interesting to ask whether the optimal strategy is unique.

\section*{Acknowledgements}

This work is supported by NSF grant CCF-1254119 and ARO grant W911NF-12-1-0541, and MURI Grant FA9550-18-1-0161.

\appendix

\section{Uncertainty relations for multiple measurements}
\label{sec:multiple}

The dimension test in~\figref{fig:protocol} is based on the uncertainty relation~\thmref{thm:uncertainty} where Alice measures the $n$ qubits in  either the Pauli $X$ or $Z$ basis transversally. 
However, uncertainty relations with more than two measurement bases abound; see~\cite{Coles17rmp}. 
It is then natural to generalize our protocol so that Alice can choose from versatile bases.

In this section, we investigate the following three scenarios, which are experimentally achievable.

\begin{enumerate}
\item Alice measures the $n$ qubits in the Pauli $X$, $Y$ or $Z$ basis, transversally.  
\item Alice measures each qubit in either the Pauli $X$ or $Z$ basis, independently.  
\\Refer to the $2^n$ measurements in total as the \emph{BB84 measurements}, as in~\cite{bennett2014quantum} in the quantum cryptography literature.  
\item Alice measures each qubit in the Pauli $X$, $Y$ or $Z$ basis, independently.  
\\Refer to the $3^n$ measurements in total as the \emph{six-state measurements}, as in~\cite{bruss1998optimal}.  
\end{enumerate}
In each of the three modified protocols, the measurement basis is chosen uniformly at random.  

It turns out that even though the test in \figref{fig:protocol} is simpler than these generalizations, at least based on a naive analysis it is as good or better for certifying dimension.  

\smallskip

The modified protocols' validities, analogous to~\thmref{thm:bound}, can also follow from the uncertainty principle. 
Indeed, one naive way to generalize the uncertainty relations to multiple orthonormal measurements is to apply~\thmref{thm:uncertainty} to all the  pairs of bases and sum them together.
Based on the corresponding uncertainty relations, bounds similar to the ones  proven in~\secref{sec:bounds} easily carry over.

\begin{corollary}
In our protocol with sufficiently large $n$, if Alice instead encodes $S$ with the Pauli $\{X,Y,Z\}$, BB84 or six-state bases, then we have, respectively,
\begin{align*} 
\text{$\{X,Y,Z\}$:} &&
\dim \H_Q&\geq 
2^{\vstretch{1.15}{(} (1 - H(\alpha)) 2 p - {\mathbf 1} \vstretch{1.15}{)} \, n - 2 H(p)}
\nonumber \\
\text{BB84:} &&
\dim \H_Q&\geq 
2^{\vstretch{1.15}{(} (1 - H(\alpha)) 2 p - \mathbf{\frac32} \vstretch{1.15}{)} \, n - 2 H(p)}   
\\
\text{Six-state:} &&
\dim \H_Q&\geq 
2^{\vstretch{1.15}{(} (1 - H(\alpha)) 2 p - \mathbf{\frac43} \vstretch{1.15}{)} \, n - 2 H(p)}  \enspace . \nonumber
\end{align*}
\end{corollary}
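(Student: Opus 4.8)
The plan is to run the same three-step template as in the proof of \thmref{thm:bound} for each of the three families, changing only the uncertainty relation used in the middle step. That is, I would start from $\log\dim\H_Q \geq -H(A\vert B)_\rho$ (Eq.~\eqnref{eqn:logdimquantum}), then lower-bound $-H(A\vert B)_\rho$ by a suitable \emph{multi-basis} uncertainty relation, and finally apply the data-processing inequality (\lemref{lem:data}) and the Fano-type estimate $H(S\vert S',\Theta)\leq H(p)+p\log M+(1-p)\log(2^n-M)$ with $M=\sum_{i=0}^{\alpha n}\binom ni\leq 2^{nH(\alpha)}$ essentially verbatim, the only change being that $H(S\vert B,\Theta)$, $p$, and $H(E\vert\Theta)$ now average over the uniform choice among all bases rather than just two. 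Since these last two steps are insensitive to which measurement family Alice uses, all the new content is in producing the multi-basis relation, and the naive recipe flagged in the text --- sum \thmref{thm:uncertainty} over all pairs of bases --- is exactly what I would carry out.

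Concretely, for a given family let $\{\theta\}$ be the $N$ measurement bases ($N=3$ for transversal $\{X,Y,Z\}$, $N=2^n$ for BB84, $N=3^n$ for six-state). The first key computation is the pairwise complementarity. For two product Pauli bases that prescribe \emph{different} single-qubit bases on exactly $k$ of the $n$ qubits, factorization of $c=\max_{x,z}\abs{\braket xz}^2$ across qubits, together with the fact that any two distinct single-qubit Pauli eigenbases are mutually unbiased (overlap $\tfrac12$), gives $c=2^{-k}$, i.e.\ $\log\tfrac1c=k$, the Hamming distance between the two basis-choice strings. I would then sum \thmref{thm:uncertainty} over all $\binom N2$ unordered pairs of distinct bases. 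By symmetry each basis sits in $N-1$ pairs, so after dividing by $\binom N2$ the entropy terms collapse to $2\,H(S\vert B,\Theta)$ --- where $H(S\vert B,\Theta)$ is, as in \secref{sec:bounds}, the average of $H(A\vert B)_{\theta(\rho)}$ over the uniform basis choice --- while the complementarity terms average to $\bar d$, the mean Hamming distance between two distinct uniformly random basis-choice strings. This yields $-H(A\vert B)_\rho \geq \bar d - 2\,H(S\vert B,\Theta)$.

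The second computation is $\bar d$ itself. A per-coordinate count gives $\bar d = n(q-1)q^{n-1}/(q^n-1)$ for a $q$-ary family, so $\bar d \to n\,(q-1)/q$ as $n\to\infty$: this is $n/2$ for BB84 ($q=2$) and $2n/3$ for six-state ($q=3$). The transversal $\{X,Y,Z\}$ family is the degenerate case in which the three bases (all-$X$, all-$Y$, all-$Z$) differ on all $n$ qubits, so $\bar d=n$ exactly and one recovers the constant of \thmref{thm:bound}. Feeding $\bar d$ through the data-processing and Fano steps gives $\log\dim\H_Q \geq \bar d - 2H(p) - 2pnH(\alpha) - 2(1-p)n$, and the leading $n$-coefficient $\bar d/n - 2 \to \{-1,-\tfrac32,-\tfrac43\}$ produces the three stated exponents $\big(2p(1-H(\alpha)) - \{1,\tfrac32,\tfrac43\}\big)n - 2H(p)$.

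The work here is essentially bookkeeping; the one place to be careful is that $\bar d/n$ only \emph{converges} to $(q-1)/q$, which is why the corollary is stated for sufficiently large $n$ (the exact finite-$n$ constant is $(q-1)q^{n-1}/(q^n-1)$). The genuinely interesting point, rather than an obstacle, is interpreting why more bases do not help: as $N$ grows the typical pair of BB84 or six-state bases agrees on most qubits and so contributes little complementarity ($\log\tfrac1c\ll n$), diluting $\bar d$ below $n$. This is exactly the sense in which the naive analysis makes the two-basis test of \figref{fig:protocol} ``as good or better,'' with the caveat worth noting that this conclusion is an artifact of the lossy pairwise summation and need not persist under a tighter direct multi-basis uncertainty relation.
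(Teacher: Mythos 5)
Your proposal is correct and takes essentially the same route as the paper: apply \thmref{thm:uncertainty} to every pair of bases, sum and average so that the complementarity term becomes the mean pairwise $\log\tfrac1c$ (i.e.\ the mean Hamming distance between basis-choice strings, giving $\tfrac12 n$, $\tfrac{2^{n-2}}{2^n-1}n$ and $\tfrac{3^{n-1}}{3^n-1}n$), and then push the result through \lemref{lem:data} and the Fano-type estimate exactly as in the proof of \thmref{thm:bound}. Your explicit evaluation of the averaged complementarity reproduces the constants the paper states, so the bookkeeping checks out.
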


\begin{proof}
Iteratively applying~\thmref{thm:uncertainty} to all the basis pairs  in the Pauli $\{X,Y,Z\}$, BB84 or six-state bases and summing up, we have, respectively,
\begin{align*}
\frac{1}{3}\sum_{\theta\in\{X,Y,Z\}}H(S\vert B, \Theta=\theta)&\ge\frac12\cdot n +\frac12\,H(A|B) \,,
\\
\frac{1}{2^n}\sum_{\theta\in\, \text{BB84}}H(S\vert B, \Theta=\theta)&\ge\frac{2^{n-2}}{2^n-1}\cdot n +\frac12\,H(A|B) \,,
\\
\frac{1}{3^n}\sum_{\theta\in\, \text{six-state}}H(S\vert B, \Theta=\theta)&\ge\frac{3^{n-1}}{3^n-1}\cdot n +\frac12\,H(A|B) \,.
\end{align*}
The corollary follows by replacing the conditional entropies in the proofs of \lemref{lem:data} and~\thmref{thm:bound} with corresponding entropies in the above three inequalities.
\end{proof}

In~\cite{berta2014extractor}, Berta et al.\ derive uncertainty relations for product measurements of any full set of mutually unbiased bases (MUB) on qudits, based on the quantum-to-classical randomness extractors they construct. For qudit with prime power dimension $d$, a full set of $d+1$ MUB satisfy that for any MUB pair $\{|x\rangle\}$ and $\{|z\rangle\}$, $|\langle x|z\rangle|^2=\frac1d,\forall x,z$. For example, in the case of qubit, a full set of MUB are the Pauli $\{X,Y,Z\}$, whose product are the six-state bases.

\begin{theorem}[{\cite[Thm.~IV.4]{berta2014extractor}}]
If Alice measures each of the $n$ qudits with a random basis from a full set of MUB independently, then we have 
\beq \nonumber
\frac{1}{(d+1)^n}\sum_{\theta\in\{X^0,X^1,\ldots,X^d\}^n}H(S\vert B, \Theta=\theta)\ge\log\frac{d+1}{2}\cdot n +\min\{0,H(A|B)\} \enspace .
\eeq
\end{theorem}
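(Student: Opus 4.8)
The statement is an entropic uncertainty relation with quantum memory for $n$-fold product measurements in a complete set of $d+1$ mutually unbiased bases, so the plan is to isolate the single-qudit content and then tensorize. The cleanest route I would take passes through the conditional collision (R\'enyi-$2$) entropy $H_2(A\vert B)$ rather than the von Neumann entropy directly, for two reasons: a complete set of MUB satisfies an exact averaging identity that controls collision probabilities on the nose, and the collision entropy tensorizes additively in the way needed to produce the factor $\log\tfrac{d+1}{2}\cdot n$. Concretely, I would (i) prove a one-qudit complete-MUB relation, (ii) lift it to the $n$-qudit product measurement, and (iii) convert the collision-entropy bound back to the von Neumann entropy of the statement.

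For the one-qudit core, the key structural fact is the MUB averaging identity $\frac{1}{d+1}\sum_{j=0}^{d} X^j(\rho)=\frac{\rho+\Tr[\rho]\,I}{d+1}$, expressing that the $d(d+1)$ rank-one MUB projectors form a tight frame (a $2$-design-like resolution of identity). With no memory this already pins down the constant: writing $P_j=\Tr[X^j(\rho)^2]=\Tr[\rho\,X^j(\rho)]$ for the index of coincidence of the $j$-th outcome distribution, the identity yields $\sum_j P_j=\Tr[\rho(\rho+\Tr[\rho]I)]=\Tr[\rho^2]+1\le 2$; convexity of $-\log$ (Jensen) then gives $\frac{1}{d+1}\sum_j(-\log P_j)\ge-\log\frac{2}{d+1}=\log\tfrac{d+1}{2}$, and $H\ge H_2$ finishes the memoryless bound. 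To bring in memory I would apply the same identity to the $A$-marginal of $\rho_{AB}$, obtaining $\frac{1}{d+1}\sum_j (X^j_A\otimes \mathrm{id}_B)(\rho_{AB})=\frac{1}{d+1}(\rho_{AB}+I_A\otimes\rho_B)$ and correspondingly $\sum_j \Tr[\rho_{AB}\,(X^j_A\otimes\mathrm{id}_B)(\rho_{AB})]=\Tr[\rho_{AB}^2]+\Tr[\rho_B^2]$; interpreting this through the conditional collision entropy $H_2(A\vert B)$ with the standard $\rho_B^{-1/4}$ weighting upgrades the one-qudit estimate to a bound of the form $\frac{1}{d+1}\sum_j H_2(A\vert B)_{X^j}\ge \log\tfrac{d+1}{2}+(\text{memory term})$.

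For the product over the $n$ qudits I would \emph{not} treat $\{X^j\}^{\otimes n}$ as a complete MUB set on $\mathbb{C}^{d^n}$ (it is not, since there are only $(d+1)^n$ such bases, far fewer than $d^n+1$), but instead combine the single-qudit relation across copies, using either a chain-rule decomposition $H(S\vert B\Theta)=\sum_i H(S_i\vert S_{<i}B\Theta)$ with the $i$-th qudit's relation applied relative to the enlarged memory $(B,S_{<i},\Theta_{\neq i})$, or the additivity of $H_2$ under the product source so that the $\log\tfrac{d+1}{2}$ constants simply add to $\log\tfrac{d+1}{2}\cdot n$. Finally I would convert back to von Neumann entropy, where the two regimes of $\min\{0,H(A\vert B)\}$ appear naturally: when $H(A\vert B)\ge 0$ the source already carries enough conditional uncertainty that the product MUB measurement acts as a randomness extractor, leaving the output nearly uniform and decoupled from $B$, so the memory contributes nothing beyond the $\log\tfrac{d+1}{2}\cdot n$ floor (the bonus is capped at $0$); whereas for $H(A\vert B)<0$ the entanglement deficit is passed through with coefficient one. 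I expect the main obstacle to be precisely this memory term: securing the \emph{undiluted} coefficient $1$ on $\min\{0,H(A\vert B)\}$ and the sharp constant $\log\tfrac{d+1}{2}$ \emph{simultaneously}, since the clean collision-entropy inequality must be transferred to von Neumann entropy (via $H\ge H_2$ together with a smoothing / asymptotic-equipartition or decoupling argument) without degrading either the intercept or the slope — this is exactly the step where the quantum-to-classical randomness-extractor machinery of Berta et al.\ does the heavy lifting.
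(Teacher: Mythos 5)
First, note that the paper does not prove this statement at all: it is imported verbatim from \cite{berta2014extractor} (their Theorem~IV.4), so there is no in-paper proof to compare against. Your sketch does follow the same route as the cited source --- the $2$-design identity $\sum_{j,x}\ketbra{x^j}{x^j}M\ketbra{x^j}{x^j}=M+\Tr[M]\,\identity$ for a full MUB set, the conditional collision entropy as the working quantity, and a final conversion to von Neumann entropy --- and your memoryless single-qudit computation ($\sum_j P_j=\Tr[\rho^2]+1\le 2$ plus Jensen) is correct.

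Two steps, however, are genuinely gapped as written. (1) \emph{Tensorization:} neither of your proposed mechanisms works as stated. The state $\rho_{A_1\cdots A_nB}$ is not assumed to be a product across the $n$ qudits, so ``additivity of $H_2$ under the product source'' does not apply; and the chain rule $H(S\vert B\Theta)=\sum_iH(S_i\vert S_{<i}B\Theta)$ would leave you with a sum $\sum_i\min\{0,H(A_i\vert BS_{<i}\cdots)\}$ that does not recombine into the single term $\min\{0,H(A\vert B)\}$. The correct tensorization is at the level of the averaged pinching superoperator: the average over $\theta\in\{0,\ldots,d\}^n$ of $\bigotimes_i\mathcal M_{\theta_i}$ equals $\frac{1}{(d+1)^n}\bigotimes_i\bigl(\mathrm{id}_{A_i}+\Tr_{A_i}[\,\cdot\,]\otimes\identity_{A_i}\bigr)$, which expands into $2^n$ terms indexed by subsets $T\subseteq[n]$, each equal (after the $\rho_B^{-1/4}$ weighting) to a collision term $2^{-H_2(A_{T}\vert B)}$; one then needs monotonicity of the conditional collision entropy under discarding part of the $A$ system to bound every term by $\max\{1,2^{-H_2(A\vert B)}\}$, which is what produces the factor $\bigl(\tfrac{2}{d+1}\bigr)^n$. (2) \emph{The $H_2\to H$ conversion is not a formality:} $H\ge H_2$ handles the left-hand side, but on the right-hand side $H_2(A\vert B)\le H(A\vert B)$ points the wrong way, so the collision-entropy bound by itself only yields $\min\{0,H_2(A\vert B)\}$, which is weaker than the claimed $\min\{0,H(A\vert B)\}$. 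Recovering the undiluted von Neumann term requires running the one-shot (smooth-entropy) version of the argument on $\rho^{\otimes m}$ and invoking the asymptotic equipartition property as $m\to\infty$. You correctly flag this as the crux and attribute it to the extractor machinery of \cite{berta2014extractor}, but you do not supply it, so as it stands the proposal establishes at best the weaker $H_2$-conditioned bound.
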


\begin{corollary}
In our protocol with sufficiently large $n$, if Alice instead encodes $S\in[d]^n$ with a random basis from the product of a full set of MUB, then  we have
\beq\label{eqn:bfw} \nonumber
\dim \H_Q \geq
2^{\vstretch{1.3}{(} (\log d - H(\alpha) - \alpha \log(d-1) ) p + \log\frac{d+1}{2d} \vstretch{1.3}{)} \, n - H(p)}  \enspace .
\eeq
\end{corollary}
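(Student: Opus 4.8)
The plan is to mirror the proof of \thmref{thm:bound} step by step, replacing the two-basis relation \thmref{thm:uncertainty} with the MUB relation of \cite[Thm.~IV.4]{berta2014extractor} quoted above, and replacing the binary counting in the Fano step with its $d$-ary analogue.

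First I would start from \eqnref{eqn:logdimquantum}, $\log\dim\H_Q \geq -H(A\vert B)_\rho$, which is basis-independent and so holds verbatim for qudits. Writing $H(S\vert B,\Theta) = (d+1)^{-n}\sum_\theta H(S\vert B,\Theta=\theta)$ for the average over the $(d+1)^n$ product MUB, the quoted theorem reads $H(S\vert B,\Theta) \geq \log\tfrac{d+1}{2}\,n + \min\{0,H(A\vert B)_\rho\}$. In the regime of interest, where Bob's register is entangled with Alice's and hence $H(A\vert B)_\rho \leq 0$, the minimum equals $H(A\vert B)_\rho$, so rearranging gives
\[
-H(A\vert B)_\rho \geq \log\tfrac{d+1}{2}\,n - H(S\vert B,\Theta)\,,
\]
and chaining with \eqnref{eqn:logdimquantum} yields $\log\dim\H_Q \geq \log\tfrac{d+1}{2}\,n - H(S\vert B,\Theta)$. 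Note there is no factor of $2$ here, unlike in \thmref{thm:bound}, because the cited relation already averages over all bases.

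Next I would apply the data-processing inequality exactly as in \lemref{lem:data} --- it is basis-independent, since $S'$ is produced by a measurement channel on $B$ --- to get $H(S\vert B,\Theta) \leq H(S\vert S',\Theta)$, hence $\log\dim\H_Q \geq \log\tfrac{d+1}{2}\,n - H(S\vert S',\Theta)$. I would then repeat the Fano argument verbatim (pass indicator $E$, chain rule on $H(ES\vert S'\Theta)$ applied two ways) to obtain $H(S\vert S',\Theta) \leq H(p) + p\log M_d + (1-p)\log(d^n - M_d)$, where now $M_d = \sum_{i=0}^{\alpha n}\binom{n}{i}(d-1)^i$ counts the winning strings $S\in[d]^n$ given $S'$: at most $\alpha n$ mismatched coordinates, each admitting $d-1$ wrong symbols.

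The only genuinely new ingredient is bounding $M_d$, the volume of a Hamming ball of radius $\alpha n$ over an alphabet of size $d$. The standard $d$-ary entropy estimate gives $M_d \leq 2^{n(H(\alpha)+\alpha\log(d-1))}$ for $\alpha \leq 1-1/d$ (for $d=2$ it reduces to the bound $M \leq 2^{nH(\alpha)}$ used before; the ``sufficiently large $n$'' hypothesis is inherited from the cited uncertainty relation). Using also $\log(d^n - M_d) < n\log d$ and collecting the coefficient of $n$,
\[
\log\tfrac{d+1}{2} - p\big(H(\alpha)+\alpha\log(d-1)\big) - (1-p)\log d = \log\tfrac{d+1}{2d} + p\big(\log d - H(\alpha) - \alpha\log(d-1)\big)\,,
\]
whose exponentiation reproduces the claimed bound. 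The main obstacle is bookkeeping rather than conceptual: one must correctly identify $M_d$ and verify the $d$-ary ball bound in the relevant range of $\alpha$, and check that the $\min\{0,H(A\vert B)\}$ clause causes no trouble --- when the state is only weakly entangled, $H(A\vert B)_\rho > 0$ makes $-H(A\vert B)_\rho < 0$, so \eqnref{eqn:logdimquantum} is vacuous, which matches exactly the regime where the claimed exponent is itself nonpositive.
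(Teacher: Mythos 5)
Your proposal is correct and follows exactly the route the paper intends (the paper leaves this corollary's proof implicit, deferring to the template of the preceding corollary): chain \eqnref{eqn:logdimquantum} with the cited MUB uncertainty relation, apply \lemref{lem:data}, and redo the Fano step with the $d$-ary Hamming-ball count $M_d=\sum_{i\le\alpha n}\binom{n}{i}(d-1)^i\le 2^{n(H(\alpha)+\alpha\log(d-1))}$, which indeed yields the stated exponent with the single factors of $p$ and $H(p)$. Your side remarks --- that the averaged form of the relation removes the factor of $2$, and that the $\min\{0,H(A\vert B)\}$ clause only matters in the regime where the claimed bound is vacuous anyway --- are both accurate and handle the only points where the adaptation could have gone wrong.
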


For comparison between different choices of encoding bases, we calculate the asymptotic number of qubits that can be certified through our lower bounds, derived from either~\thmref{thm:uncertainty} or~\cite{berta2014extractor}; see~\tabref{tab:asymptotic}.  
It is important to note that here we assume the ideal noiseless scenario where Bob is honest. That is, $\dim \H_Q = 2^n$ and Bob measures the $n$ qubits with the same basis as Alice (thus $p=1$); we also set $\alpha = O(1/n)$. 

The last three protocols in~\tabref{tab:asymptotic} are not ``complete," in the sense that the corresponding uncertainty relations fail to certify all the $n$ qubits that an honest Bob possesses.  
This is what one would expect, since the bases are not mutually unbiased, i.e., they have complementarities less than~$n$.  
Also note that for product of full set MUB on $n$ qudits, the uncertainty relations given by~\cite{berta2014extractor} are always tighter than those simply derived from~\thmref{thm:uncertainty}, except for the six-state bases on qubits, where $\log\frac32 < \frac23$.  

\begin{table}
\begin{center}
\renewcommand{\arraystretch}{1.3}
\begin{tabular}{c|c|c|c|c}
   $\{X,Z \}$& $\{X,Y,Z\}$ & BB84 & full set MUB & full set MUB\\
\hline
$n$ & $n$ & $n/2$ & $\frac{d}{d+1}\cdot n$ & $\log\big(\frac{d+1}{2}\big) \cdot n$   \cite{berta2014extractor}\\
\end{tabular}
\end{center}
\caption{Asymptotic number of certified qubits for different measurement bases, where Bob adopts the noiseless and optimal strategy, i.e., measures in the same basis as Alice. All lower bounds are proved via the two-measurement uncertainty relation in~\thmref{thm:uncertainty}, except for the case of full set of MUB on $n$ qudits, where the uncertainty relation based on certain classical-quantum extractor~\cite{berta2014extractor} is considered as well.}
\label{tab:asymptotic}
\end{table}

\bibliographystyle{alpha-eprint}
\bibliography{q}
\end{document}